%%
%% Copyright 2007, 2008, 2009 Elsevier Ltd
%%
%% This file is part of the 'Elsarticle Bundle'.
%% ---------------------------------------------
%%
%% It may be distributed under the conditions of the LaTeX Project Public
%% License, either version 1.2 of this license or (at your option) any
%% later version.  The latest version of this license is in
%%    http://www.latex-project.org/lppl.txt
%% and version 1.2 or later is part of all distributions of LaTeX
%% version 1999/12/01 or later.
%%
%% The list of all files belonging to the 'Elsarticle Bundle' is
%% given in the file `manifest.txt'.
%%

%% Template article for Elsevier's document class `elsarticle'
%% with numbered style bibliographic references
%% SP 2008/03/01
%%
%%
%%
%% $Id: elsarticle-template-num.tex 4 2009-10-24 08:22:58Z rishi $
%%
%%
\documentclass[preprint,12pt,3p]{elsarticle}

%% Use the option review to obtain double line spacing
%% \documentclass[preprint,review,12pt]{elsarticle}

%% Use the options 1p,twocolumn; 3p; 3p,twocolumn; 5p; or 5p,twocolumn
%% for a journal layout:
%% \documentclass[final,1p,times]{elsarticle}
%% \documentclass[final,1p,times,twocolumn]{elsarticle}
%% \documentclass[final,3p,times]{elsarticle}
%% \documentclass[final,3p,times,twocolumn]{elsarticle}
%% \documentclass[final,5p,times]{elsarticle}
%% \documentclass[final,5p,times,twocolumn]{elsarticle}

%% if you use PostScript figures in your article
%% use the graphics package for simple commands
%% \usepackage{graphics}
%% or use the graphicx package for more complicated commands
%% \usepackage{graphicx}
%% or use the epsfig package if you prefer to use the old commands
%% \usepackage{epsfig}

%% The amssymb package provides various useful mathematical symbols
\usepackage{amssymb}
%% The amsthm package provides extended theorem environments
%% \usepackage{amsthm}

\usepackage{amsthm}
\usepackage{graphicx}
\usepackage{amsmath}
\usepackage{hyperref}
\usepackage{graphicx}

\usepackage{pgf,tikz}

\usetikzlibrary{snakes}
\tikzstyle{printersafe}=[snake=snake,segment amplitude=0 pt]

\usetikzlibrary{arrows}
\usetikzlibrary{shapes}
\usepackage{multirow}
\usepackage{latexsym}
\usepackage{setspace}
\PassOptionsToPackage{boxed,section}{algorithm}
\usepackage{algorithm}
\usepackage{algorithmic}
\usepackage{enumerate}
\usepackage{amsmath}			% special AMS math definitions
\usepackage{amssymb}			% special AMS math symbols
\usepackage{url}
\usepackage{setspace}
\usepackage{tikz}
\usetikzlibrary{arrows}
\usetikzlibrary{shapes}
\usetikzlibrary{decorations.markings}
\usepackage{geometry}
\usepackage{bbm}

\newtheorem{theorem}{\em Theorem}

\newtheorem{lemma}{\em Lemma}

\newtheorem{corollary}{\em Corollary}
\newtheorem{observation}{\em Observation}

%% The lineno packages adds line numbers. Start line numbering with
%% \begin{linenumbers}, end it with \end{linenumbers}. Or switch it on
%% for the whole article with \linenumbers after \end{frontmatter}.
%% \usepackage{lineno}

%% natbib.sty is loaded by default. However, natbib options can be
%% provided with \biboptions{...} command. Following options are
%% valid:

%%   round  -  round parentheses are used (default)
%%   square -  square brackets are used   [option]
%%   curly  -  curly braces are used      {option}
%%   angle  -  angle brackets are used    <option>
%%   semicolon  -  multiple citations separated by semi-colon
%%   colon  - same as semicolon, an earlier confusion
%%   comma  -  separated by comma
%%   numbers-  selects numerical citations
%%   super  -  numerical citations as superscripts
%%   sort   -  sorts multiple citations according to order in ref. list
%%   sort&compress   -  like sort, but also compresses numerical citations
%%   compress - compresses without sorting
%%
%% \biboptions{comma,round}

% \biboptions{}

\journal{Journal}

\begin{document}

\begin{frontmatter}

\title{Reducing Maximum Weighted Matching to the Largest Cardinality Matching in {\bf CONGEST} \tnoteref{label0}}
\tnotetext[label0]{The work on this paper has been supported by the National Science Foundation through Grant CCF-2008422}

\author[label1]{Vahan Mkrtchyan}
\address[label1]{Computer Science Department, Boston College, Chestnut Hill, MA, USA-02467}
%\address[label2]{Address Two\fnref{label4}}

%\cortext[cor1]{I am corresponding author}
%\fntext[label3]{I also want to inform about\ldots}
%\fntext[label4]{Small city}
%
\ead{vahan.mkrtchyan@bc.edu}
%\ead[url]{author-one-homepage.com}
%
%\author[label5]{Author Two}
%\address[label5]{Some University}
%\ead{author.two@mail.com}
%
%\author[label1,label5]{Author Three}
%\ead{author.three@mail.com}

\begin{abstract}
In this paper, we reduce the maximum weighted matching problem to the largest cardinality matching in {\bf CONGEST}. The paper presents two technical contributions. The first of them is a simple $poly(\log n, \frac{1}{\varepsilon}, t, \ln w_t)$-round {\bf CONGEST} algorithm for reducing the maximum weighted matching problem to the largest cardinality matching problem. This is achieved under the assumption that all vertices know all edge-weights $\{w_1,....,w_t\}$ (in particular, they know $t$, the number of different edge-weights), though a particular vertex may not know the weight of a particular edge. Our second ingredient is a simple rounding algorithm (similar to approximation algorithms for the bin packing problem) allowing to reduce general instances of the maximum weighted matching problem to ones satisfying the assumptions of the first ingredient, in which $t\leq poly'(\log n, \frac{1}{\varepsilon})$. We end the paper with a brief discussion of implementing our algorithms in {\bf CONGEST}. Our main conclusion is that we just need constant rounds for the reduction.
\end{abstract}
%and the existence of a similar algorithm for the largest cardinality matching problem. 

\begin{keyword}
%% keywords here, in the form: keyword \sep keyword
Matching \sep maximum matching \sep approximation algorithm \sep the maximum weighted matching problem \sep {\bf CONGEST} model.
%% MSC codes here, in the form: \MSC code \sep code
%% or \MSC[2008] code \sep code (2000 is the default)
\MSC 68W15 \sep 68W25 \sep 05C85
\end{keyword}

\end{frontmatter}

%%
%% Start line numbering here if you want
%%
% \linenumbers

%% main text

\section{Introduction}
\label{IntroSection}

Let $Z^+$ be the set of positive integers, and let $Q^+$ be the set of positive rational numbers. We will frequently use the notation $poly()$ to denote a polynomial. We will use it when we will not care about the coefficients and the degree of the polynomial. In the cases, when we will work with more than one such undefined polynomials, we will use notations $poly'()$, $poly''()$, etc. in order to explicitly state that they might be different.

In this paper, we consider finite, undirected graphs that do not contain loops or parallel edges. If $G$ is a graph, then let $V=V(G)$ and $E=E(G)$ be the sets of its vertices and edges, respectively. A matching in a graph is a subset of its edges such that every vertex is incident to at most one edge from the subset.

The focus of the present paper is on the largest cardinality matching problem, in which for a given graph $G$ one needs to find a largest matching. Let $\nu(G)$ be the size of a largest matching of $G$. Then, in this problem the goal is to come up with an algorithm whose output is a matching of size $\nu(G)$. In this paper, our focus is on obtaining approximation algorithms for this problem and its weighted extension stated below. Hence, in the input we assume that we are given a graph $G$, and a number $\varepsilon \in (0,1)$. The goal is to return a matching $M$ of $G$, such that $M$ contains at least $(1-\varepsilon)\cdot \nu(G)$ edges. Throughout the paper, we will assume that $\varepsilon$ is a rational number.

In the paper, we will consider the maximum weighted matching problem (denoted by MWM), in which in parallel to the given graph $G$, we will be given a number $w(e)\in Z^+$ called the weight of the edge $e\in E(G)$. The goal is to find a matching $M$ maximizing the weight of $M$, that is
\[w(M)=\sum_{e\in M}w(e).\]
The weight of the optimal matching will be denoted by $OPT(G)$. Clearly, the maximum weighted matching problem includes the largest cardinality matching problem, as if $w(e)=1$, we get the latter problem. Also, note that the maximum weighted matching problem can be considered as the non-uniform case of the largest cardinality matching problem, since if all edges of $G$ have the same weight $W$, then the weights are irrelevant and in particular we have
\[OPT(G)=W\cdot \nu(G).\]
In the paper, we will be interested in approximating this problem, hence in the input except the graph $G$, its edge-weights $w(e)$, we will be given a number $\varepsilon \in (0,1)$. The goal is to return a matching $M$ of $G$, such that the edges of $M$ together have weight at least $(1-\varepsilon)\cdot OPT(G)$. 

The following extension of MWM to rational weights will be important for us. We will denote it by Rational MWM, and its input, output and the goals will be the same as in MWM, except that the weights will be assumed to be rational numbers from the interval $[1, +\infty)$.

The first polynomial time algorithm for the largest cardinality matching problem and MWM, its weighted extension was proposed in \cite{E1965,E1965II}. More on these problems and their history can be found in \cite{DP14}, where, in particular, tables III, IV and references therein provide a summary of the state of art before 2014 both for the largest cardinality matching problem and MWM.

Approximation algorithms for polynomial-time solvable problems have been presented before. \cite{DH03} presents a linear time $\frac{1}{2}$-approximation algorithm for MWM. In \cite{DP14}, for any $\varepsilon>0$, a $(1-\varepsilon)$-approximation algorithm for MWM is presented that runs in time $O(m\cdot \frac{1}{\varepsilon}\cdot \log \frac{1}{\varepsilon})$. Here $m=|E(G)|$. See \cite{DHZ20} for an approximation algorithm for a spanning tree problem, in which one seeks to minimize the maximum degree of the tree. This problem is known to be polynomial time solvable thanks to \cite{Furer94}. A recent preprint \cite{KTY22} provides a $(1+\varepsilon)$-approximation algorithm for computing the degeneracy of the graph.

In this paper, we will be interested in designing algorithms for Rational MWM in the {\bf CONGEST} model from the area of distributed computing. In the {\bf CONGEST} model we view each vertex of the input graph as a processor, and in each round vertices (or processors) are allowed to exchange messages of length $O(\log n)$. Here $n$ is the number of vertices of the input graph. We will be interested in designing algorithms in this model that have small number of such rounds.

Let us note that in the {\bf CONGEST} model, generally it is required that all integer weights are bounded by $poly(n)$. Since, in the paper we will work with Rational MWM, where the weights might be rational numbers, we will view each rational number $r$ as a pair of integers $(p, q)$, where both $p$ and $q$ are bounded by $poly(n)$. We assume that all vertices know $\varepsilon$, $W$-the largest edge-weight and $poly(n)$-the upper bound mentioned above.

\cite{AK20} presents a distributed algorithm for verifying a given solution of the largest cardinality matching problem. See \cite{AKO18} for more on this. \cite{Kitamura2021} presents a randomized {\bf CONGEST} algorithm for finding a matching of size $\nu(G)$ whose number of rounds depends polynomially on $\log n$ and $\nu(G)$. \cite{FFK21} presents $poly(\log n, \frac{1}{\varepsilon})$-round $(1-\varepsilon)$-approximation algorithm for the Weighted Vertex Cover problem in bipartite graphs in {\bf CONGEST}. There, similar result is obtained for MWM with exponential dependence on $\frac{1}{\varepsilon}$ and no restriction on the input graph. This is improved in 
\cite{FMU21}, where the authors present a $poly(\log n, \frac{1}{\varepsilon})$-round $(1-\varepsilon)$-approximation algorithm for the Largest Cardinality Matching problem in {\bf CONGEST}. See \cite{LPSP15} for more on this. \cite{HYN22} presents a $poly(\log \log n)$-round randomized {\bf CONGEST} algorithm for $(1+\varepsilon)\cdot \Delta$-edge-coloring all graphs with $n$ vertices for sufficiently large constant maximum degree $\Delta$. See the recent preprint \cite{FHM22}, for results related to the classical Brooks' theorem on $\Delta$-coloring all graphs with maximum degree $\Delta$. Distributed algorithms for spanning tree and other problems can be found in \cite{ DHN18, GP19, Kutten98}. Finally, let us note that one can find new open problems in \cite{C21} for further research.

In this paper, we show a {\bf CONGEST} algorithm that allows to reduce Rational MWM to the largest cardinality matching problem. Combined with \cite{FMU21}, our result implies that there is a $poly(\log n, \frac{1}{\varepsilon})$-round $(1-\varepsilon)$-approximation algorithm for Rational MWM in {\bf CONGEST}. Let us note that the same result for MWM has been obtained by Su and Huang, independently in \cite{SH22}.

\section{Some auxiliary results}
\label{AuxResultsSection}

In this section, we present some auxiliary results that will be helpful for obtaining the main result of the paper. We start with a lemma that will be frequently used in various reductions.

\begin{lemma}
\label{LemmaEpsilonFracTreIncrease} Let $G$ be a graph equipped with a weight function $w:E\rightarrow Q^+$, and let $\varepsilon \in (0,1)$ be a rational number. Consider a new instance of MWM where the graph is the same, the weight function is $w'$, which coincides with $w$, except that all edges $e$ with smallest $w$-weight $w_1$, now have weight $w'_1$, where \[w_1\leq w'(e)=w'_1\leq \left(1+\frac{\varepsilon}{t}\right)\cdot w_1.\] Then if a matching $M$ of $G$ is such that
\[w'(M)\geq (1-\theta)\cdot OPT',\]
where $\theta=\frac{t-1}{t}\cdot \varepsilon$, then \[w(M)\geq (1-\varepsilon)\cdot OPT.\]
\end{lemma}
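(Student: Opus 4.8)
The plan is to compare the two weight functions edge by edge, lift that comparison to whole matchings, and then chain the resulting inequalities together with the hypothesis $w'(M)\ge(1-\theta)\,OPT'$.

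First I would record that $w'$ dominates $w$ but is only mildly larger: for every edge $e$ we have
\[
w(e)\le w'(e)\le \left(1+\frac{\varepsilon}{t}\right)w(e).
\]
On edges of smallest $w$-weight this is exactly the chain $w_1\le w'_1\le(1+\frac{\varepsilon}{t})w_1$ from the statement, while on every other edge $w'(e)=w(e)$, so the bound holds trivially (here I use that all weights are positive rationals, so the multiplicative upper bound is legitimate). Summing over the edges of an arbitrary matching $M$ then gives $w(M)\le w'(M)\le(1+\frac{\varepsilon}{t})w(M)$, and in particular $w(M)\ge w'(M)\big/\bigl(1+\frac{\varepsilon}{t}\bigr)$.

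Next I would relate the two optima. Since $w'\ge w$ holds pointwise, evaluating $w'$ on a $w$-optimal matching shows at once that $OPT'\ge OPT$. Combining this with the hypothesis and the lower bound on $w(M)$ just obtained yields
\[
w(M)\ge \frac{w'(M)}{1+\frac{\varepsilon}{t}}\ge \frac{(1-\theta)\,OPT'}{1+\frac{\varepsilon}{t}}\ge \frac{(1-\theta)\,OPT}{1+\frac{\varepsilon}{t}}.
\]

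It then remains to verify the purely arithmetic inequality $\dfrac{1-\theta}{1+\varepsilon/t}\ge 1-\varepsilon$. Substituting $\theta=\frac{t-1}{t}\varepsilon$, so that $1-\theta=1-\varepsilon+\frac{\varepsilon}{t}$, and clearing the positive denominator reduces this to $(1-\varepsilon+\frac{\varepsilon}{t})-(1-\varepsilon)(1+\frac{\varepsilon}{t})=\frac{\varepsilon^2}{t}\ge 0$, which always holds. I do not anticipate any real obstacle: the only points that deserve a word of care are that the weights are strictly positive (so the pointwise multiplicative estimate is valid) and that $1-\theta>0$, which follows from $\theta<\varepsilon<1$ and guarantees that all the inequalities point in the intended direction.
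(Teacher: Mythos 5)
Your proposal is correct, but it follows a genuinely different route from the paper's proof. The paper argues \emph{additively}: it sets $z=w'_1-w_1\le\frac{\varepsilon}{t}w_1$, observes that passing from $w'$ back to $w$ costs at most $z$ per edge, hence $w(M)\ge w'(M)-z\cdot\nu(G)$, and then absorbs the loss using the key inequality $w_1\cdot\nu(G)\le OPT\le OPT'$, which crucially exploits the fact that $w_1$ is the \emph{smallest} weight (every edge of any matching has $w$-weight at least $w_1$). This yields exactly $w(M)\ge(1-\theta-\frac{\varepsilon}{t})OPT=(1-\varepsilon)OPT$ with no slack. You instead argue \emph{multiplicatively}: the pointwise sandwich $w(e)\le w'(e)\le(1+\frac{\varepsilon}{t})w(e)$ gives $w(M)\ge w'(M)/(1+\frac{\varepsilon}{t})$ and $OPT'\ge OPT$, and the conclusion reduces to the arithmetic fact $\frac{1-\theta}{1+\varepsilon/t}\ge 1-\varepsilon$, which holds with slack $\frac{\varepsilon^2}{t}$. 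Your version is actually slightly more general: it never uses that the perturbed edges are those of minimum weight, only that every edge satisfies the multiplicative bound, so it would survive perturbing \emph{all} edges by a factor $1+\frac{\varepsilon}{t}$; the paper's argument, by contrast, is tied to the minimum-weight class but closes the inequality exactly. It is also worth noting that your technique mirrors the paper's own proof of its Lemma~\ref{lemmaRoundingUp} (the rounding-up lemma), where precisely this kind of per-edge ratio bound $\alpha_j\le\beta_j\cdot\tau$ is summed over the matching — so the paper uses both styles, just assigns them to different lemmas.
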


\begin{proof} Let $z=w'_1-w_1\leq \frac{\varepsilon}{t}\cdot w_1$. Assume that $M$ is a matching with 
\[w'(M)\geq \left(1-\theta\right)\cdot OPT',\]
where $OPT'$ is the optimum for $w'$, and $\theta=\frac{t-1}{t}\cdot \varepsilon$. We have
\begin{align*}
    w(M) &\geq w'(M)-z\cdot \nu(G)\geq \left(1-\theta\right)\cdot OPT'-\frac{z}{w_1}\cdot w_1\cdot \nu(G)\\
         &\geq \left(1-\theta\right)\cdot OPT-\frac{z}{w_1}\cdot OPT=\left(1-\theta-\frac{z}{w_1}\right)\cdot OPT\geq \left(1-\theta-\frac{\varepsilon}{t}\right)\cdot OPT\\
         &=(1-\varepsilon)\cdot OPT,
\end{align*} since
\[w_1\cdot \nu(G)\leq OPT\leq OPT'\]
and
\[\theta+\frac{\varepsilon}{t}=\frac{t-1}{t}\cdot \varepsilon+\frac{\varepsilon}{t}=\varepsilon.\]
The proof is complete.
\end{proof}

\begin{corollary}
\label{corollaryEpsilonFracT} Suppose that $G$ is a graph, $w:E\rightarrow Q^+$ is a weight function taking values $w_1<w_2<...<w_t$. Assume that 
\[w_2-w_1\leq \frac{\varepsilon}{t}\cdot w_1.\]
Then, if we consider a new instance of MWM where the graph is the same, the edge weight function is $w'$, which coincides with $w$ except that edges of $w$-weight $w_1$ now have $w'$-weight $w_2$. Then if $M$ is a matching of $G$ with
\[w'(M)\geq (1-\theta)\cdot OPT',\]
where $\theta=\frac{t-1}{t}\cdot \varepsilon$, then
\[w(M)\geq (1-\varepsilon)\cdot OPT.\]
\end{corollary}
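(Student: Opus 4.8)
The plan is to derive this statement as an immediate consequence of Lemma~\ref{LemmaEpsilonFracTreIncrease}. That Lemma already covers precisely the situation in which the edges of smallest weight $w_1$ are uniformly increased to a common new value $w'_1$ lying in the interval $[w_1, (1+\frac{\varepsilon}{t})\cdot w_1]$, and its conclusion is exactly what the Corollary asserts: a matching that is a $(1-\theta)$-approximation for the perturbed weights $w'$ is a $(1-\varepsilon)$-approximation for the original weights $w$. Therefore the only real task is to recognize that the specific choice $w'_1 = w_2$ made in the Corollary falls inside this admissible interval.

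To that end, I would first set $w'_1 := w_2$ and verify the two required inequalities. The lower bound $w_1 \leq w_2$ is immediate from the ordering assumption $w_1 < w_2 < \dots < w_t$. The upper bound $w_2 \leq (1+\frac{\varepsilon}{t})\cdot w_1$ follows by rearranging the Corollary's hypothesis $w_2 - w_1 \leq \frac{\varepsilon}{t}\cdot w_1$: adding $w_1$ to both sides yields $w_2 \leq w_1 + \frac{\varepsilon}{t}\cdot w_1 = (1+\frac{\varepsilon}{t})\cdot w_1$. Hence the new weight function $w'$ of the Corollary is exactly an instance of the perturbation permitted by Lemma~\ref{LemmaEpsilonFracTreIncrease}, with the increased value $w'_1$ taken to be the next distinct weight $w_2$.

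Having checked the hypotheses, I would simply invoke Lemma~\ref{LemmaEpsilonFracTreIncrease} with this choice $w'_1 = w_2$. Since the threshold $\theta = \frac{t-1}{t}\cdot \varepsilon$ is identical in both statements, the Lemma's conclusion $w(M) \geq (1-\varepsilon)\cdot OPT$ is precisely the conclusion sought. No genuine obstacle arises: the entire quantitative content of the argument already resides in the preceding Lemma, and the Corollary is merely the special case where the perturbed value equals $w_2$. The one point worth stating explicitly is the short algebraic verification that $w_2$ lies in the admissible range, which is exactly why the hypothesis is conveniently phrased as $w_2 - w_1 \leq \frac{\varepsilon}{t}\cdot w_1$ rather than as a direct bound on $w'_1$.
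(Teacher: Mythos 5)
Your proposal is correct and matches the paper's intent exactly: the paper states this Corollary immediately after Lemma~\ref{LemmaEpsilonFracTreIncrease} with no separate proof, precisely because it is the special case $w'_1 = w_2$, and your verification that $w_1 \leq w_2 \leq \left(1+\frac{\varepsilon}{t}\right)\cdot w_1$ follows from the hypothesis $w_2 - w_1 \leq \frac{\varepsilon}{t}\cdot w_1$ is the only step needed.
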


We will use the following lemma frequently.
\begin{lemma}
\label{lemmaRoundingUp} Let $(G, w, \varepsilon\in (0,1))$ be an instance of MWM. For $\tau >1$ consider an instance $(G, w', \theta)$, where each edge $e$ of weight $w(e)$ with
\[ \tau^{i-1}< w(e) \leq \tau^{i},\]
is rounded to $\tau^{i}$. Then if $M$ is a matching with
\[w'(M)\geq (1-\theta)\cdot OPT',\]
then
\[w(M)\geq (1-\varepsilon)\cdot OPT,\]
provided that
 \[\frac{1}{\tau}\cdot (1-\theta)\geq 1-\varepsilon.\]
\end{lemma}

\begin{proof} For $j \geq 0$ let $B_j$ be the set of edges of $G$ with \[\tau^{j-1}<w(e)\leq \tau^{j}\] (and hence $w'(e)=\tau^j$). Let $\alpha_j=\tau^j$, and let $\beta_j$ be the smallest edge-weight in $B_j$. If we assume that $B_1,...,B_r$ are the only non-empty ones, we will have:
\begin{align*}
    w(M) &\geq \beta_1\cdot |M\cap B_1|+...+\beta_r\cdot |M\cap B_r|=\frac{\beta_1}{\alpha_1}\cdot \alpha_1 \cdot |M\cap B_1|+...+\frac{\beta_r}{\alpha_r}\cdot \alpha_r \cdot |M\cap B_r|\\
    &\geq \frac{1}{\tau}\cdot \left( \alpha_1 \cdot |M\cap B_1|+...+ \alpha_r \cdot |M\cap B_r| \right)=\frac{1}{\tau}\cdot w'(M)\geq \frac{1}{\tau}\cdot \left(1-\theta\right)\cdot OPT'\\
    &\geq (1-\varepsilon)\cdot OPT.
\end{align*} Here we used the simple inequality
\[OPT\leq OPT',\]
and
\[\alpha_j\leq \beta_j \cdot \tau\]
for $j=1,...,r$. In each group the ratio between largest and smallest edge-weight is at most $\tau$ by definition. The proof is complete.
\end{proof}

The following observation will be helpful.

\begin{observation}
\label{observationLogFunction} Consider the function \[ f(x)=\frac{1}{\ln \left(\frac{1}{1-x}\right)}=-\frac{1}{\ln (1-x)}\] 
%and \[ (b): g(x)=\frac{1}{\ln (1+x)} \textbf{ this could be redundant} \]
on the interval $(0,1)$. Then, there is a positive $A>0$, such that
\[f(x)\leq  A+\frac{1}{x}\]
%and \[g(x)\leq A+\frac{1}{x} \textbf{ this could be redundant}\] 
for every $x\in (0,1)$.
\end{observation}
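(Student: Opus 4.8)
The plan is to reduce the claimed estimate to a single elementary inequality for the logarithm. Since both $f(x)$ and $\tfrac{1}{x}$ are positive on $(0,1)$, and since I expect $f(x)$ to in fact be dominated by $\tfrac{1}{x}$ itself, I would first prove the pointwise bound
\[f(x)\leq \frac{1}{x}\qquad\text{for all } x\in(0,1),\]
after which any positive constant $A>0$ yields the statement, because then $f(x)\leq \tfrac1x\leq A+\tfrac1x$.

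The heart of the argument is the inequality $-\ln(1-x)\geq x$ on $(0,1)$. I would establish it from the Maclaurin expansion
\[-\ln(1-x)=x+\frac{x^2}{2}+\frac{x^3}{3}+\cdots=\sum_{k\geq 1}\frac{x^k}{k},\]
which converges for $x\in(0,1)$ and has only nonnegative terms; discarding all terms past the first leaves the lower bound $-\ln(1-x)\geq x$. (Equivalently, one may invoke the tangent-line inequality $\ln(1+u)\leq u$ with $u=-x$.) On $(0,1)$ we have $1-x\in(0,1)$, so $\ln(1-x)<0$ and hence $-\ln(1-x)>0$; both sides of $-\ln(1-x)\geq x$ are therefore strictly positive, and I may invert the inequality to obtain $\frac{1}{-\ln(1-x)}\leq \frac1x$, which is exactly $f(x)\leq \tfrac1x$.

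With the pointwise bound in hand, the conclusion is immediate: choose, say, $A=1>0$ (any positive number works), so that $f(x)\leq \tfrac1x<A+\tfrac1x$ throughout $(0,1)$. I do not foresee a genuine obstacle here; the only thing to be careful about is the sign bookkeeping when inverting the logarithmic inequality, which is why I first record that $-\ln(1-x)$ is strictly positive on the whole interval before taking reciprocals.

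If a sharper constant were wanted, an alternative route is to study $g(x)=f(x)-\tfrac1x$ directly: the same series expansion shows $\lim_{x\to0^+}g(x)=-\tfrac12$ and $\lim_{x\to1^-}g(x)=-1$, so $g$ extends continuously to the compact interval $[0,1]$ and hence attains a finite maximum $M$, giving the (typically smaller) admissible value $A=\max(M,1)$. But for merely proving the existence of some $A>0$, the direct bound $f(x)\leq \tfrac1x$ is cleanest.
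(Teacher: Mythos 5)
Your proof is correct, but it takes a genuinely different route from the paper's. The paper argues by splitting the interval: near $0$ it uses the limit $\lim_{x\to 0}\frac{\ln(1+x)}{x}=1$ to conclude that $f(x)<\frac{1}{x}+1$ on some $(0,\delta)$, and on $[\delta,1)$ it uses continuity of $f$ together with $\lim_{x\to 1}f(x)=0$ to get a bound by some constant $A$; the constant produced this way is non-explicit. You instead prove the stronger pointwise inequality $f(x)\leq \frac{1}{x}$ on all of $(0,1)$, via the elementary series (or tangent-line) bound $-\ln(1-x)\geq x$, after correctly checking that both sides are positive before taking reciprocals. Your argument is more elementary (no limits, no splitting of the interval, no appeal to boundedness of a continuous function), it shows that \emph{every} $A>0$ works rather than merely that some $A$ exists, and it in fact shows the additive constant is unnecessary altogether. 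This would even marginally simplify the paper's downstream use of the observation, where bounds of the form $t'\leq 1+\left(A+\frac{t}{\varepsilon}\right)\cdot\ln w_t$ appear: with your inequality one could drop $A$ and write $t'\leq 1+\frac{t}{\varepsilon}\cdot\ln w_t$. The paper's soft-analysis approach buys nothing extra here beyond being a generic template that would still apply if only the asymptotic behavior of $f$ were known; for this particular $f$, your direct bound is cleaner and sharper.
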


\begin{proof} We provide a proof for the sake of completeness. Recall that
\[ \lim_{x\rightarrow 0}\frac{\ln (1+x)}{x}=1.\]
Hence,
\[\lim_{x\rightarrow 0}\frac{f(x)}{\frac{1}{x}}= \lim_{x\rightarrow 0} \frac{ \frac{1}{\ln (\frac{1}{1-x})} }{\frac{1}{x}}=\lim_{x\rightarrow 0} \frac{x}{-\ln ({1-x})}=\lim_{x\rightarrow 0} \frac{-x}{\ln ({1-x})}=1. \]
Thus, there is a positive $\delta > 0$, such that for any $x\in (0,\delta)$, we have
\[f(x)<\frac{1}{x}+1.\]
On the other hand, the function $f(x)$ is continuous on the interval $[\delta, 1)$ and
\[\lim_{x\rightarrow 1} f(x)=0. \]
Thus, $f(x)$ is bounded by some $A>0$ on $[\delta, 1)$. Hence, we have the statement. The proof is complete.
\end{proof}

The next lemma will play an important role in obtaining the main result of the paper. Throughout the paper we assume that {\bf LargestCardinalityMatching} is an algorithm for approximating the largest cardinality matching problem in {\bf CONGEST} in $poly(\log n, \frac{1}{\varepsilon})$ rounds. When we need to specify its input, we will write {\bf LargestCardinalityMatching}$(G, \varepsilon)$.

\begin{lemma}
\label{lemmatdifferentFIXEDEdgeWeightsCongest}  For all rational $\varepsilon\in (0,1)$ there exists a $poly(\log n, \frac{1}{\varepsilon}, t, \ln w_t)$-round $(1-\varepsilon)$-approximation algorithm for Rational-MWM in {\bf CONGEST} for all instances $(G, w, \varepsilon>0)$ with at most $t$ (rational) edge-weights \[1\leq w_1\leq w_2\leq ...\leq w_t,\] in the case when all vertices of $G$ know the numbers $\{w_1, w_2, ...,w_t\}$ though they may not know which edge has which weight. In particular, all vertices know $t$. 
%\[x=\min_{2\leq j\leq t}\{w_j-w_{j-1}\}=w_2-w_1.\]
%We assume that $x=1$, when $t=1$. {\bf Think on this later.}
\end{lemma}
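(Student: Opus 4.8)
The plan is to construct the desired matching by invoking \textbf{LargestCardinalityMatching} on a bounded number of threshold subgraphs and to combine these calls so that the weighted guarantee degrades by only $\varepsilon$. Write the distinct weights as $w_1\le \dots \le w_t$ and set $w_0=0$. The starting point is the telescoping identity that for every matching $M$,
\[ w(M)=\sum_{j=1}^{t}(w_j-w_{j-1})\,|M\cap E_{\ge j}|, \]
where $E_{\ge j}$ denotes the set of edges of weight at least $w_j$; applied to an optimal matching this yields the upper bound $OPT\le \sum_{j}(w_j-w_{j-1})\,\nu(E_{\ge j})$. Thus it suffices to produce a single matching that is, in a suitable averaged sense, nearly cardinality-optimal on each of the nested subgraphs $E_{\ge t}\subseteq \dots \subseteq E_{\ge 1}=E$.

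First I would normalize the weight structure. Using Lemma~\ref{lemmaRoundingUp} with a ratio $\tau$ just above $1$ I round every weight up to the nearest power of $\tau$; choosing $\tau$ so that $\tfrac1\tau(1-\theta)\ge 1-\varepsilon$ lets me recover a $(1-\varepsilon)$-guarantee for the original weights from a $(1-\theta)$-guarantee for the rounded ones, at no asymptotic cost in rounds. After this step the distinct weights have a clean multiplicative structure, and Observation~\ref{observationLogFunction} bounds the number of occupied powers of $\tau$ lying in $[1,w_t]$, namely $\tfrac{\ln w_t}{\ln \tau}$, by $poly(t,\tfrac1\varepsilon)\cdot \ln w_t$; together with the hypothesis that there are at most $t$ of them, this is what ultimately contributes the $t$ and $\ln w_t$ factors to the round complexity. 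I would further allocate an error budget of $\varepsilon/t$ per weight class, exactly as in Lemma~\ref{LemmaEpsilonFracTreIncrease} and Corollary~\ref{corollaryEpsilonFracT}, so that a sequence of $t$ class-by-class operations, each losing at most $\varepsilon/t$, accumulates to a total loss of at most $\varepsilon$.

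With the weights normalized I would process the classes from heaviest to lightest, maintaining a matching $M$. At class $j$ I call \textbf{LargestCardinalityMatching} on the subgraph of edges of weight at least $w_j$ and use its output to \emph{augment} the current $M$ rather than to overwrite it, so that committing to a heavy edge is never allowed to permanently destroy more light weight than it gains. Each such call costs $poly(\log n,\tfrac1\varepsilon)$ rounds; since there are $O(t)$ classes (equivalently $poly(t,\tfrac1\varepsilon)\cdot\ln w_t$ occupied scales) and each weight value is encoded in $O(\log w_t)=O(\ln w_t)$ bits inside the $O(\log n)$-size messages, the whole procedure runs in $poly(\log n,\tfrac1\varepsilon,t,\ln w_t)$ rounds, as required.

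The main obstacle is precisely the combination step, i.e.\ proving the $(1-\varepsilon)$ guarantee rather than the $\tfrac12$ guarantee that a naive greedy top-down selection would give: a single matching cannot be simultaneously near-optimal on every nested subgraph $E_{\ge j}$ (a heavy edge may block two lighter edges of almost the same total weight), so the analysis cannot treat the classes independently. I expect to resolve this by charging the deficit incurred at class $j$ against the weight gap $w_j-w_{j-1}$ through the telescoping bound above, and by arguing that bounded-length augmentations between consecutive classes keep the per-class deficit within the allotted $\varepsilon/t$; verifying that these deficits genuinely telescope to $\varepsilon$, and that the required augmentations can in fact be realized by cardinality-matching calls in $poly(\log n,\tfrac1\varepsilon)$ rounds each, is the technical heart of the argument.
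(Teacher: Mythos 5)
There is a genuine gap, and it sits exactly where you placed it yourself: the ``combination step'' is not a technical verification to be deferred, it is the entire problem, and the property your analysis needs is false in general. Your plan is to produce one matching $M$ that is (approximately) cardinality-optimal on every nested threshold subgraph $E_{\ge j}$, so that the telescoping identity $w(M)=\sum_{j}(w_j-w_{j-1})\,|M\cap E_{\ge j}|$ converts per-level guarantees into $w(M)\ge(1-\varepsilon)\cdot OPT$. No such matching need exist: take a three-edge path $e_1,e_2,e_3$ with $w(e_1)=w(e_3)=1$ and $w(e_2)=2$. Any $M$ with $|M\cap E_{\ge 2}|\ge (1-\varepsilon)\nu(E_{\ge 2})=1-\varepsilon$ must contain $e_2$, hence has $|M|=1=\tfrac12\,\nu(E_{\ge 1})$; so for $\varepsilon<\tfrac12$ no matching is $(1-\varepsilon)$-optimal on both levels simultaneously. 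Equivalently, your benchmark $\sum_j(w_j-w_{j-1})\,\nu(E_{\ge j})$ is not tight (here it equals $3$ while $OPT=2$), so even the exact weighted optimum carries per-level deficits that are a constant fraction of $OPT$, and charging deficits against the gaps $w_j-w_{j-1}$ relative to that benchmark cannot close the argument. The repair you gesture at --- bounded-length augmentations between consecutive classes, realized by cardinality-matching calls --- is precisely the hard content of threshold-based weighted matching algorithms (it can be made to work, but it is the subject of entire papers), and neither your sketch nor the tools available here (Lemma \ref{LemmaEpsilonFracTreIncrease}, Corollary \ref{corollaryEpsilonFracT}, Lemma \ref{lemmaRoundingUp}, which control \emph{reweighting} of a fixed matching, not the \emph{combination} of different matchings) supplies it.

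For contrast, the paper's proof avoids this obstacle entirely: it never decomposes $G$ into subgraphs and makes only one call to {\bf LargestCardinalityMatching} along any execution path. It works purely on the weight function: if the two smallest weights are within a relative $\varepsilon/t$ of each other, the smallest class is merged into the next (Corollary \ref{corollaryEpsilonFracT}); otherwise all weights are rounded up to powers of $x=\frac{1}{1-\varepsilon/t}$ (Lemma \ref{lemmaRoundingUp}), and then the smallest class is repeatedly pushed upward by factors $1+\frac{\varepsilon}{t'}$ (Lemma \ref{LemmaEpsilonFracTreIncrease}) until it merges with the class above it. After at most $O(t')$ merges all edges have equal weight, and a single unweighted call with accuracy $\varepsilon_{unweighted}\ge \varepsilon/t'^2$ finishes. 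Every loss is a multiplicative factor $\frac{t'-1}{t'}$ applied to $\varepsilon$, and the analysis only ever compares the weight of one and the same (unknown) matching under two weight functions --- which is exactly why no analogue of your combination step, and no simultaneous multi-level guarantee, is ever needed.
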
 

\begin{proof} Consider the Algorithm \ref{algAUX}.

 \begin{algorithm}[htbp]
  
  \caption{The algorithm {\bf MaximumWeightedMatching}$(G, w, \varepsilon, t )$.}\label{algAUX}
          %{\FUNCTION {\sc $\;${\sc Lightest Weight Vertex Out Algorithm}()}}
\begin{algorithmic}[1]
\STATE {\bf Input:} A graph $G$, a weight-function $ w:E(G)\rightarrow Q$, $ \varepsilon\in (0,1)\cap Q^+ $ and $t\in Z^+$.
\STATE {\bf Output:} A matching $M$ of the graph $G$.
\STATE {\bf Assumption:} We assume that the rational numbers $w_1, w_2, ...,w_t$ are known to all vertices of $G$. In particular, all vertices know $t$. Let us note that the vertices are not expected to know which edge has which weight.

\medskip

\medskip

\medskip

%\STATE If $t=1$, return {\bf LargestCardinalityMatching}$(G, \varepsilon)$.
\IF{($t=1$)}
\STATE return {\bf LargestCardinalityMatching}$(G, \varepsilon)$.
\ENDIF
\STATE Set: $z=w_2-w_1$. 
\IF{$(\frac{z}{w_1}\leq \frac{\varepsilon}{t})$} 
\STATE Define a new edge weight function $w'$, which coincides with $w$ except that all edges of weight $w_{1}$ now have weight $w_2$. Set: $\varepsilon:=\frac{t-1}{t}\cdot \varepsilon$.
\STATE Return {\bf MaximumWeightedMatching}$(G, w', \varepsilon, t-1)$.
%\ENDIF
%\STATE If $(\frac{z}{w_1}\leq \frac{\varepsilon}{t})$ then define a new edge weight function $w'$, which coincides with $w$ except that all edges of weight $w_{1}$ now have weight $w_2$. Set: $\varepsilon'=\frac{t-1}{t}\cdot \varepsilon$. 
%\STATE Return {\bf MaximumWeightedMatching}$(G, w', \theta:=\varepsilon':=\frac{t-1}{t}\cdot \varepsilon)$.
%\STATE If $(\frac{z}{w_1}> \frac{\varepsilon}{t})$ then
%\IF{$(\frac{z}{w_1} > \frac{\varepsilon}{t})$}
\ELSE

\STATE Define $x=\frac{1}{1-\frac{\varepsilon}{t}}$ and $\varepsilon:=\frac{t-1}{t}\cdot \varepsilon$. 

\STATE Each vertex $v$ of $G$ rounds the weight $w(e)$ of an edge $e$ incident to it to $w'(e)=x^j$, where $x^{j-1} < w(e) \leq x^j$. Let $t'$ be the upper bound for the number of different edge-weights. Set: $z:=w'_2-w'_1$.
\ENDIF

\IF{($z=0$)} 
\STATE return {\bf LargestCardinalityMatching}$(G, \varepsilon)$. 
\ENDIF
%\STATE If $z'=0$, return {\bf LargestCardinalityMatching}$(G, \varepsilon')$.

\IF{$(\frac{z}{w'_1} > \frac{\varepsilon}{t'})$}
    \STATE Define a new edge weight function $w''$, which coincides with $w'$ except that all edges of weight $w'_{1}$ now have weight $(1+\frac{\varepsilon}{t'})\cdot w'_1$. 
  \ELSE
    \STATE Define a new edge weight function $w''$, which coincides with $w'$ except that all edges of weight $w'_{1}$ now have weight $w'_2$. Take $t':=t'-1$.
  \ENDIF

%\STATE If $(\frac{z'}{w'_1} > \frac{\varepsilon'}{t'})$ then define a new edge weight function $w''$, which coincides with $w'$ except that all edges of weight $w'_{1}$ now have weight $(1+\frac{\varepsilon'}{t'})\cdot w'_1$. Go to STEP 11. %{\bf previous step goto.}
%
%\STATE If $(\frac{z'}{w'_1}\leq \frac{\varepsilon'}{t'})$ then define a new edge weight function $w''$, which coincides with $w'$ except that all edges of weight $w'_{1}$ now have weight $w'_2$. Take $\varepsilon'':=\frac{t'-1}{t'}\cdot \varepsilon'$ and {\bf take $t'':=t'-1$}. 
\STATE Set: $z:=w''_2-w''_1$, $\varepsilon:=\frac{t'-1}{t'}\cdot \varepsilon$.
\STATE Go to STEP 15.

\end{algorithmic}
\end{algorithm}

Let us show that for any instance $(G, \varepsilon, w(e)\in \{w_1, w_2,...,w_t\})$ it always returns a matching $M$ with $w(M)\geq (1-\varepsilon)\cdot OPT$. Let us prove this statement by induction on $t$. Clearly, this is true for $t=1$. Assume that it is true for all instances with at most $t-1$ different edge-weights, and let us consider an instance with $t$ different edge-weights. Note that in STEP 19 we increase the smallest edge-weight of the current instance of MWM. Hence at some point, we will reach to an instance where the condition in STEP 18 will not be satisfied, which will decrease $t$-the number of different edge-weights in STEP 21, and hence inductive hypothesis can be applied. Thus, our algorithm will return a matching $M$ with \[ w(M)\geq (1-\varepsilon)\cdot OPT\] because of the inductive hypothesis, Lemma \ref{LemmaEpsilonFracTreIncrease}, Corollary \ref{corollaryEpsilonFracT} and Lemma \ref{lemmaRoundingUp}.

\medskip

Now let us find an upper bound for the number of rounds of our algorithm. Assume that $B$ and $C$ are constants such that the number of rounds for {\bf LargestCardinalityMatching} is at most
\[\leq B\left( \frac{\log n}{\varepsilon}\right)^C.\]
If the condition in STEP 4 is satisfied, then the number of rounds is upper bounded by
\[\leq 1+ poly\left(\log n, \frac{1}{\varepsilon}, t', \ln w_t\right)= 1+ poly\left(\log n, \frac{1}{\varepsilon}, t-1, \ln w_t\right)\]
since $t'=t-1$. Let us assume that the condition in STEP 4 is not satisfied. Then, after STEP 12 and STEP 13, we will have a new instance with weights $\{x, x^2,...,x^{t'}\}$, where \[x=\frac{1}{1-\frac{\varepsilon}{t}}=\frac{1}{1-\frac{\varepsilon'}{t'}}\leq 1+2\cdot \frac{\varepsilon}{t}=1+2\cdot \frac{\varepsilon'}{t-1},\] since $t\geq 2$,

\[x^{t'-1}<w_t\leq x^{t'},\] and \[\varepsilon'=\frac{t-1}{t}\cdot \varepsilon. \]
Then
\[t'< 1+ \log_{x} w_t = 1+ \frac{\ln w_t}{\ln x}\leq 1+\left(A+\frac{t}{\varepsilon}\right)\cdot \ln w_t\]
by Observation \ref{observationLogFunction}.

Now, note that because of the inequality in the definition of $x$, after { at most two rounds} in STEP 15, STEP 18 and STEP 19, the algorithm will reach STEP 21, where it will decrease the number of different edge-weights by one. Moreover, since the ratio between two consecutive edge-weights is $x$, after {at most two rounds} in STEP 15, STEP 18 and STEP 19, the algorithm will reach STEP 21, where it will decrease the number of edge-weights, again.
%({\bf because
%\[\frac{\varepsilon}{t}=\frac{\varepsilon'}{t'}=\frac{\varepsilon''}{t''}\]
%hence the value of $x$ does not change after the described reductions!!!!!}). 
Thus, for the current $\varepsilon$ we have that in these rounds, it makes jumps according to the following scheme:
\[\varepsilon\rightarrow \frac{\varepsilon}{\frac{t'}{t'-1}}\rightarrow \frac{\varepsilon}{\left(\frac{t'}{t'-1}\right)^2}=\varepsilon\cdot \left(\frac{t'-1}{t'}\right)^{2}.\]
Hence
\[\varepsilon'\geq \left(\frac{t'-1}{t'}\right)^{2} \cdot \varepsilon.\]
Therefore, after at most $2t'$ rounds, we will reach STEP 16, where the largest cardinality matching algorithm will be applied with
\[ \varepsilon_{unweighted}\geq \left(\frac{1}{2}\right)^{2} \cdot \left(\frac{2}{3}\right)^{2}\cdot ...\cdot \left(\frac{t'-2}{t'-1}\right)^{2} \cdot  \left(\frac{t'-1}{t'}\right)^{2} \cdot \varepsilon=\frac{\varepsilon}{t'^2}.\]

Thus, if we have $t$ different edge-weights, we can safely say that there will be at most
\[\leq 2\cdot t+B\cdot \left(\frac{\log n}{\varepsilon_{unweighted}} \right)^C\]
rounds. Here, $\varepsilon_{unweighted}$ is the last value of $\varepsilon$, where the algorithm calls {\bf LargestCardinalityMatching}. We have
\[ \varepsilon_{unweighted}\geq \left(\frac{1}{t'}\right)^{2} \cdot \varepsilon. \]
Hence
\[\frac{1}{\varepsilon_{unweighted}}\leq \frac{t'^2}{\varepsilon}.  \]
Thus, the number of rounds will be upper bounded by
\[\leq 2t+B\cdot \left(\frac{\log n}{\varepsilon_{unweighted}} \right)^C \leq 2t+B\cdot \left(\frac{\log n}{\varepsilon}\cdot t'^2 \right)^C=poly\left(\log n, \frac{1}{\varepsilon}, t, \ln w_t\right).\]
The proof is complete. %{\bf Pay attention to $\ln W_t$. If this analyses is correct, then it is redundant. Check!}
%Note that if we are in STEP 1 of the algorithm, then the number of different edge-weights decreases by one after it. On the other hand, if we are in STEP 2, then since we are increasing $w_{1}$ by $\frac{\varepsilon}{t}\cdot w_1$, then after $t$ rounds, we will increase it by $\varepsilon\cdot w_1$. Thus, after at most $\left\lceil \frac{z\cdot t}{\varepsilon\cdot w_1}\right\rceil$ rounds, we will get a new edge-weight that satisfies STEP 1. In conclusion, we can say that after at most \[1+\left\lceil \frac{z\cdot t}{\varepsilon\cdot w_1}\right\rceil\leq 2+\frac{z\cdot t}{\varepsilon\cdot w_1}=\tau\] rounds, the decrease of the number of the different edge-weights will happen.
\end{proof}

\section{The main result}
\label{MainResultSection}

In this section, we obtain the main result of the paper.

\begin{theorem}
\label{theoremMain} For all $\varepsilon\in (0,1)$ there exists a $poly(\log n, \frac{1}{\varepsilon})$-round $(1-\varepsilon)$-approximation algorithm for MWM in {\bf CONGEST} provided that there is a similar algorithm for the largest cardinality matching problem.
\end{theorem}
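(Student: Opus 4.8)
The plan is to remove the only obstruction that prevents Lemma \ref{lemmatdifferentFIXEDEdgeWeightsCongest} from running in $poly(\log n, \frac{1}{\varepsilon})$ rounds, namely that a general instance of MWM may have far too many distinct edge-weights (up to $W = poly(n)$ of them) and too large a maximum weight. Recall that in the {\bf CONGEST} setting the weights are integers bounded by some $W = poly(n)$ and that every vertex knows $\varepsilon$ and $W$. Starting from $(G, w, \varepsilon)$ with $w : E(G) \to Z^+$ and $1 \leq w(e) \leq W$, I would fix $\tau = 1 + \frac{\varepsilon}{2}$ together with $\theta = \frac{\varepsilon}{2}$, noting that these satisfy the hypothesis $\frac{1}{\tau} \cdot (1-\theta) \geq 1-\varepsilon$ of Lemma \ref{lemmaRoundingUp} (indeed $\frac{1-\varepsilon/2}{1+\varepsilon/2} \geq 1-\varepsilon$). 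Each vertex then rounds the weight of every incident edge up to the nearest power $\tau^{i}$, exactly as in Lemma \ref{lemmaRoundingUp}, producing the weight function $w'$. Because the weight of an edge $uv$ is known to both $u$ and $v$ and the rounding rule depends only on $\tau$, this step is purely local, costs no communication rounds, and assigns the same value $w'(uv)$ at both endpoints.

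After rounding, the distinct weights are powers of $\tau$ lying in $[1, W]$, so their number is at most $t' \leq 1 + \lceil \log_{\tau} W \rceil = 1 + \lceil \frac{\ln W}{\ln \tau} \rceil$. Since $W = poly(n)$ and $\frac{1}{\ln \tau} = O(\frac{1}{\varepsilon})$ by Observation \ref{observationLogFunction}, this gives $t' = O(\frac{\log n}{\varepsilon}) = poly(\log n, \frac{1}{\varepsilon})$. Moreover, every vertex can compute the full candidate set $\{\tau^{0}, \tau^{1}, \dots, \tau^{t'}\}$ on its own from $W$, $\varepsilon$ and $\tau$, and this superset contains every weight that actually occurs; hence the hypotheses of Lemma \ref{lemmatdifferentFIXEDEdgeWeightsCongest} are met with $t = t'$ (empty weight-classes are harmless, since they contribute nothing to the reductions performed inside that lemma). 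Applying Lemma \ref{lemmatdifferentFIXEDEdgeWeightsCongest} to $(G, w', \theta)$ yields a matching $M$ with $w'(M) \geq (1-\theta) \cdot OPT'$, and Lemma \ref{lemmaRoundingUp} then converts this guarantee back to $w(M) \geq (1-\varepsilon) \cdot OPT$ for the original weights.

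The delicate point is the round complexity, and it is exactly here that both consequences of the geometric rounding must be invoked simultaneously. Lemma \ref{lemmatdifferentFIXEDEdgeWeightsCongest} charges $poly(\log n, \frac{1}{\varepsilon}, t', \ln w'_{t'})$ rounds, so it is not enough to control the number of weights: the magnitude of the largest rounded weight enters through $\ln w'_{t'}$. The same rounding handles this, since $w'_{t'} = \tau^{t'} \leq \tau \cdot W$ gives $\ln w'_{t'} \leq \ln \tau + \ln W = O(\log n)$. Substituting $t' = poly(\log n, \frac{1}{\varepsilon})$ and $\ln w'_{t'} = O(\log n)$ into the bound collapses it to $poly(\log n, \frac{1}{\varepsilon})$, which is the desired round complexity. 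I expect this coupling --- choosing a single rounding parameter $\tau = 1 + \Theta(\varepsilon)$ that simultaneously caps $t'$ at $O(\frac{\log n}{\varepsilon})$ and $\ln w'_{t'}$ at $O(\log n)$, while being executable with no communication because $W$ and $\varepsilon$ are common knowledge --- to be the main point of the argument; the approximation correctness itself follows mechanically from Lemma \ref{lemmaRoundingUp}, so that the reduction adds only local computation on top of the cost of Lemma \ref{lemmatdifferentFIXEDEdgeWeightsCongest}.
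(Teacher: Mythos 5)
Your proposal is correct and takes essentially the same route as the paper's own proof: one local geometric rounding of all weights to powers of $\tau = 1+\Theta(\varepsilon)$ (the paper takes $\tau=\frac{1}{1-\varepsilon/2}$, you take $1+\frac{\varepsilon}{2}$, an immaterial difference), which simultaneously bounds the number of distinct weights by $O(\frac{\log n}{\varepsilon})$ and the logarithm of the largest rounded weight by $O(\log n)$, after which Lemma \ref{lemmatdifferentFIXEDEdgeWeightsCongest} is invoked and correctness is transferred back through Lemma \ref{lemmaRoundingUp}. The remaining differences are cosmetic, e.g., you bound $\ln w'_{t'}$ by $\ln(\tau W)$ while the paper computes it as $r\cdot\ln\tau$, and both yield the same $poly(\log n,\frac{1}{\varepsilon})$ round count.
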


\begin{proof} Consider the Algorithm \ref{algMain}.

 \begin{algorithm}[htbp]
  
  \caption{The main algorithm {\bf MaxWeightedMatchingMain}$(G, w, \varepsilon )$.}\label{algMain}
          %{\FUNCTION {\sc $\;${\sc Lightest Weight Vertex Out Algorithm}()}}
\begin{algorithmic}[1]
\STATE {\bf Input:} A graph $G$ on $n$ vertices, an edge-weight function $w:E(G)\rightarrow Q^+$, and a rational $\varepsilon\in (0,1)$.
\STATE {\bf Output:} A matching $M$ of the graph $G$.

\medskip

\medskip

\medskip

\STATE Define: $\varepsilon_1:=\frac{\varepsilon}{2}$, $\tau:=\frac{1}{1-\varepsilon_1}=\frac{1}{1-\frac{\varepsilon}{2}}$.

\STATE Each vertex $v$ rounds the weight $w(e)$ of an edge $e$ incident to it, to the closest power of $\tau$. That is, if $\tau^{j-1}<w(e)\leq \tau^{j}$, then $v$ defines the new edge-weight of $e$ as $w'(e)=\tau^j$. 

%\STATE Let $\theta=\frac{\varepsilon}{2}$.

\STATE return {\bf MaxWeightedMatching}$(G, w',  \varepsilon_1, r )$.

\end{algorithmic}
\end{algorithm}

Now we are going to present the analysis of the algorithm. First of all, let us start with its correctness. Note that operations described in STEP 3, and STEP 4 can be done easily in {\bf CONGEST} since each vertex can round the edge-wight of an edge incident to it in one round. Next, note that though not every vertex of $G$ knows the $w$-weight of every edge, after this step, every vertex knows $\tau$ and $r$, hence it knows all possible edge-weights in $w'$. Since $W$ is known to all vertices, $r$, an upper bound for the number of different edge-weights in $w'$ can be computed by all vertices. Thus, the correctness of our main algorithm directly follows from Lemma \ref{lemmaRoundingUp} and Lemma \ref{lemmatdifferentFIXEDEdgeWeightsCongest}. 

Now, let us turn to upper bounding the number of rounds. Note that the number of rounds is at most \[\leq 1+poly(\log n, \frac{1}{\varepsilon}, r, \ln w_r),\]
where $r$ is the number of different edge-weights in the new instance, and $w_r$ is the largest edge-weight in $w'$. We have
\[\tau^{r-1}<W\leq \tau^r=w_r.\]
Hence
\[r< 1+\log_{\tau} W = 1+\frac{\ln W}{\ln \tau}\leq 1+{\left(\ln B+C\ln n\right)}\cdot \left(A+\frac{2}{\varepsilon}\right)=poly'\left(\log n, \frac{1}{\varepsilon}\right)\]
and
\[\ln w_r=r\cdot \ln \tau=r\cdot \ln \left(\frac{1}{1-\frac{\varepsilon}{2}}\right)\leq poly''\left(\log n, \frac{1}{\varepsilon}\right)\]
by Observation \ref{observationLogFunction}. Hence, we immediately get that the number of rounds is upper bounded by $poly'''\left(\log n, \frac{1}{\varepsilon}\right)$. The proof of the theorem is complete.
\end{proof}

In \cite{FMU21} the following result is announced:

\begin{theorem}
\label{theoremFMU}(\cite{FMU21}) For all $\varepsilon \in (0,1)$ there exists a $poly(\log n, \frac{1}{\varepsilon})$-round $(1-\varepsilon)$-approximation algorithm for the Largest Cardinality Matching problem in {\bf CONGEST}.
\end{theorem}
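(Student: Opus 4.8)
The plan is to follow the classical augmenting-path framework and make each ingredient work in {\bf CONGEST} with the required round complexity. The starting point is the Hopcroft--Karp structural bound: if a matching $M$ admits no augmenting path of length at most $2k-1$, then $|M|\geq \frac{k}{k+1}\cdot \nu(G)$. Setting $k=\Theta(\frac{1}{\varepsilon})$ immediately yields $|M|\geq (1-\varepsilon)\cdot \nu(G)$. Hence the whole problem reduces to the following algorithmic task: starting from the empty matching, repeatedly eliminate short augmenting paths, so that no augmenting path of length below $2k+1$ survives.

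First I would organize the computation into phases in the spirit of Hopcroft--Karp. In each phase one fixes the current length $\ell\leq 2k-1$ of the shortest augmenting path and augments along a maximal vertex-disjoint collection of augmenting paths of that length; a standard exchange argument shows the length of the shortest augmenting path strictly increases from phase to phase, so that $O(k)=O(\frac{1}{\varepsilon})$ phases suffice. Within a phase, the augmenting structure is discovered by growing a layered alternating BFS of depth $O(\ell)=O(\frac{1}{\varepsilon})$ from the free vertices; since each layer costs a constant number of {\bf CONGEST} rounds, a single exploration costs $O(\frac{1}{\varepsilon})$ rounds.

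The genuinely hard parts are two. The first is extracting a large (ideally maximal) set of vertex-disjoint augmenting paths from the layered structure while respecting the $O(\log n)$ bandwidth: many candidate paths compete for the same vertices and edges, so one needs a distributed symmetry-breaking and congestion-control mechanism — an MIS-type selection on the conflict structure of the candidate paths, or a random-contraction argument — to commit to a large disjoint subset in $poly(\log n,\frac{1}{\varepsilon})$ rounds without flooding any edge. The second, and the main obstacle in general graphs, is the presence of \emph{blossoms}: odd alternating cycles invalidate the clean bipartite layered picture, and Edmonds-style contraction of blossoms is awkward in {\bf CONGEST} because contracting a subgraph changes the communication topology. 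I would handle this by detecting blossoms during the BFS growth and simulating their contraction locally, representing a contracted blossom by a leader vertex that aggregates and relays messages on behalf of the whole odd cycle, so that the layered search proceeds as if the graph were bipartite, with each contraction and expansion charged against the $O(\frac{1}{\varepsilon})$-depth of the search.

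Putting the pieces together, $O(\frac{1}{\varepsilon})$ phases, each consisting of an $O(\frac{1}{\varepsilon})$-depth alternating search plus a $poly(\log n,\frac{1}{\varepsilon})$-round disjoint-path extraction and a blossom-contraction subroutine, yields a total of $poly(\log n,\frac{1}{\varepsilon})$ rounds, and the terminal matching is a $(1-\varepsilon)$-approximation by the Hopcroft--Karp bound. I expect the blossom handling together with the simultaneous congestion-free extraction of many disjoint augmenting paths to be where essentially all of the technical difficulty lies; the reduction to short augmenting paths and the bound on the number of phases are routine.
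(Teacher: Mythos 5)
You should first be aware that the paper contains no proof of this statement: Theorem~\ref{theoremFMU} is imported verbatim from \cite{FMU21} and used as a black box, so your attempt can only be measured against that external work. Your skeleton is the standard one --- the Hopcroft--Karp bound (no augmenting path of length $\leq 2k-1$ implies $|M|\geq\frac{k}{k+1}\nu(G)$), $O(\frac{1}{\varepsilon})$ phases, and a layered alternating search of depth $O(\frac{1}{\varepsilon})$ --- and that part is correct. But the two steps you yourself flag as ``the genuinely hard parts'' are exactly where the entire content of the theorem lives, and as written your sketch does not close either of them, so this is a program rather than a proof.

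Concretely: (a) an MIS-type selection on the conflict structure of \emph{discovered} candidate paths yields a maximal disjoint subfamily of those candidates, not a maximal set of vertex-disjoint augmenting paths of length $\ell$ in the graph; the phase invariant (strict increase of the shortest augmenting path length) requires that \emph{no} length-$\ell$ augmenting path survives the phase, so you must iterate the find-and-augment step within a phase, and you give no bound on the number of such iterations --- bounding this is a central difficulty in all known approaches. (b) In nonbipartite graphs the monotonicity claim is not ``a standard exchange argument'': blossoms destroy the layered BFS picture (a vertex can occur at both an even and an odd level, so the layer of a vertex is not well defined), which is precisely why exact algorithms need the delicate Micali--Vazirani machinery; your leader-based contraction also ignores nested blossoms and the congestion created when many simultaneous searches from free vertices traverse shared edges under $O(\log n)$ bandwidth. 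The actual proof in \cite{FMU21} is a substantially more involved deterministic framework, developed for the semi-streaming model and then transferred to \textbf{CONGEST}, and does not reduce to the per-phase maximal-disjoint-paths scheme you outline. Since the present paper only invokes the theorem as a primitive, none of its results are affected, but your proposal as it stands has genuine gaps at both of the points above.
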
 Combined with Theorem \ref{theoremMain}, one can deduce that there is a $poly'(\log n, \frac{1}{\varepsilon})$-round $(1-\varepsilon)$-approximation algorithm for the Maximum Weighted Matching problem in {\bf CONGEST}.

\section{Implementing our algorithm in constant rounds in {\bf CONGEST}}

In this section, we show that we can reduce MWM to the largest cardinality matching in constant (actually, four) rounds in {\bf CONGEST}. 

The main idea is the following. In Round 1, every vertex $u$ of $G$ sends its label/ID to the neighboring vertices. This allows every vertex to learn the labels of all vertices adjacent to it. This requires just one round, and after that the vertices agree that the weight of any edge $e=uv$ will change the vertex with the smallest label.

In Round 2, every vertex can do the local rounding required in STEP 4 of {\bf MaxWeightedMatchingMain}. Note that after this step, the parameters $(\tau_1=\tau, t_1=t, \varepsilon_1=\frac{\varepsilon}{2}\in (0,\frac{1}{2}))$ can be assumed known to all vertices of the graph. This is because every vertex knows $\varepsilon$ and $W$-the largest edge-weight in $G$.

In Round 3, every vertex can do the local rounding required in STEP 13 of {\bf MaximumWeightedMatching}. Note that after this step, the parameters $(\tau_2=x, t_2=t', \varepsilon_2=\varepsilon'\in (0,1))$ can be assumed known to all vertices of the graph. After this round, note that {\bf MaximumWeightedMatching} becomes just a simple FOR loop, in which every vertex {locally and independently} computes the final approximation parameter in the instance of Largest Cardinality Matching. Note that the vertices just need to compute $\varepsilon_{final}$, and no exchange of information is required except those of labels in Round 1.

{\bf Remark:} Note that since the vertices do not exchange information after Round 1, the fact whether there is an edge $e$ with $w(e)>W$ or not, are locally the same (from the perspective of the vertex). Hence, if we have an upper bound for $t_1$ (or $t_2$), without loss of generality, we can assume that $t_1$ (or $t_2$) coincide with the bound. And each vertex continues its FOR loop till this parameter even if initially the vertex was not adjacent to an edge with this weight.

Finally, we would like to discuss the following issue on presenting rational numbers in our algorithm. Since we assumed that rational numbers are presented as a pair of integers that are bounded by $poly(n)$, we have to make sure that in our algorithm we do not have a rational number violating this constraint. Note that in our description of the algorithm in constant rounds, no rational number is exchanged between vertices. Hence, all these numbers are computed by a vertex and they are not involved in other {\bf CONGEST} algorithms except $\varepsilon_{unweighted}$ that is part of the algorithm from \cite{FMU21}. However, we can easily overcome this problem by simply assuming that the actual value of $\varepsilon_{unweighted}$ is $\frac{1}{2^k}$, where $k\in Z^+$ and \[\frac{1}{2^k}\leq \varepsilon_{unweighted} < \frac{1}{2^{k-1}}.\]
For the definition of $\varepsilon_{unweighted}$ see the end of the proof of Lemma \ref{lemmatdifferentFIXEDEdgeWeightsCongest}.

\section*{Acknowledgement}
\label{AcknowledgeSection} The author would like to thank Boston College, its Computer Science department and Dr. Hsin-Hao Su for hospitality.

%\section{Competing interest and data availability statement}
%
%Data sharing not applicable to this article as no datasets were generated or analysed during the current study. The work on this paper has been supported by the National Science Foundation through Grant CCF-2008422. The author has no other relevant financial or non-financial interests to disclose.

%% References
%%
%% Following citation commands can be used in the body text:
%% Usage of \cite is as follows:
%%   \cite{key}         ==>>  [#]
%%   \cite[chap. 2]{key} ==>> [#, chap. 2]
%%

%% References with bibTeX database:

\bibliographystyle{elsarticle-num}

% \bibliographystyle{elsarticle-harv}
% \bibliographystyle{elsarticle-num-names}
% \bibliographystyle{model1a-num-names}
% \bibliographystyle{model1b-num-names}
% \bibliographystyle{model1c-num-names}
% \bibliographystyle{model1-num-names}
% \bibliographystyle{model2-names}
% \bibliographystyle{model3a-num-names}
% \bibliographystyle{model3-num-names}
% \bibliographystyle{model4-names}
% \bibliographystyle{model5-names}
% \bibliographystyle{model6-num-names}

%\bibliography{sample}

\end{document}